\documentclass[a4paper,fleqn]{cas-sc}

\usepackage[numbers,sort&compress]{natbib}

\usepackage{tikz}
\usetikzlibrary{positioning,arrows.meta,shapes.geometric,fit,calc,backgrounds}
\usepackage{booktabs}
\usepackage{tabularx}
\usepackage{listings}
\usepackage{xcolor}
\usepackage{amsmath,amssymb}
\usepackage{enumitem}

\definecolor{kw}{HTML}{1565C0}
\definecolor{cmt}{HTML}{6A6A6A}
\definecolor{str}{HTML}{0F9D58}
\lstdefinestyle{p11py}{%
  language=Python,
  basicstyle=\ttfamily\scriptsize,
  keywordstyle=\color{kw}\bfseries,
  commentstyle=\color{cmt}\itshape,
  stringstyle=\color{str},
  numbers=none,
  showstringspaces=false,
  breaklines=true,
  columns=fullflexible,
  xleftmargin=0.4em,
  aboveskip=0.4em,
  belowskip=0.4em,
  frame=tb,
  framesep=0.3em,
  framerule=0.4pt,
  rulecolor=\color{black!30},
}

\newcommand{\pcode}[1]{\texttt{\small #1}}

\hypersetup{colorlinks=true, linkcolor=black, citecolor=black,
            urlcolor=black, filecolor=black}
\usepackage{placeins}

\usepackage{amsthm}
\newtheoremstyle{p11plain}%
  {6pt}{6pt}{\itshape}{0pt}{\bfseries}{.}{ }{}
\newtheoremstyle{p11definition}%
  {6pt}{6pt}{\normalfont}{0pt}{\bfseries}{.}{ }{}

\theoremstyle{p11definition}

\theoremstyle{p11plain}
\newtheorem{theorem}{Theorem}

\newtheorem{proposition}[theorem]{Proposition}
\renewcommand{\proofname}{Proof}
\makeatletter
\renewenvironment{proof}[1][\proofname]{\par
  \pushQED{\qed}\normalfont \topsep6\p@\@plus6\p@\relax
  \trivlist
  \item[\hskip\labelsep\bfseries
    #1\@addpunct{.}]\ignorespaces
}{\popQED\endtrivlist\@endpefalse}
\makeatother

\begin{document}
\let\WriteBookmarks\relax

\shorttitle{Audit-First Rollback Semantics}
\shortauthors{X.\ Qin et al.}

\title[mode=title]{Audit-First Rollback Semantics for Safety-Critical
       Deployment Pipelines}

\author[1]{Xue Qin}[orcid=0009-0009-3642-2663]
\credit{Conceptualization, Methodology, Software, Formal analysis, Investigation, Writing -- original draft}

\author[2]{Simin Luan}[orcid=0000-0003-1138-1892]
\credit{Investigation, Validation, Writing -- review and editing}

\author[3]{Cong Yang}[orcid=0000-0002-8314-0935]
\cormark[1]
\ead{cong.yang@suda.edu.cn}
\credit{Supervision, Writing -- review and editing}

\author[2]{Zhijun Li}[orcid=0000-0001-9129-9957]
\cormark[1]
\ead{lizhijun_os@hit.edu.cn}
\credit{Supervision}

\affiliation[1]{organization={School of Software, Harbin Institute of Technology},
            city={Harbin},
            country={China}}

\affiliation[2]{organization={School of Computer Science and Technology, Harbin Institute of Technology},
            city={Harbin},
            country={China}}

\affiliation[3]{organization={School of Future Science and Engineering, Soochow University},
            city={Suzhou},
            country={China}}

\cortext[1]{Corresponding authors}

\begin{abstract}
Distributed deployment runtimes carry a coherence obligation that
classical fault-tolerance frameworks do not name directly: the
\emph{live state} a component is configured to run and the
\emph{audit chain} that records how it got there must agree at
every terminal configuration. Prior works mainly focus on
individual aspects of the deploy-time fault surface (canary
controllers, configuration rollback, signed attestations), leaving
the cross-cutting question of audit/live coherence under fail-stop
crash only loosely specified. Yet a key systems question remains
unresolved: how can a deployment runtime guarantee that the audit
chain answers truthfully about live state even when a transition
crashes mid-flight? We present \emph{audit-first rollback
semantics}, a fault-tolerance mechanism that guarantees audit/live
coherence at every committed terminal under fail-stop crashes
during transition phases. The mechanism pairs with
\emph{provisional state machines}, pipelines whose ``active but
not yet promoted'' states carry an explicit rollback contract and
a bounded deadline. We instantiate both in a runtime deployment
system and run a dependability evaluation against a fail-open
variant of the same pipeline across twelve fault-injection points
spanning three structural failure classes. Across $1{,}200$
injected-failure trials, audit-first rollback achieves $100\%$
audit/live-state coherence ($600/600$, Wilson 95\% CI
$[0.994,1.000]$) with per-cell $p_{95}$ recovery latency below
$500$~ms (SLO PASS in $12/12$ cells); the fail-open variant
preserves coherence on only $33\%$ of trials ($200/600$, Wilson
95\% CI $[0.297,0.372]$). We further lift the construction to a
cross-bridge coordination protocol with a safety argument for
fleets of fail-stop bridges, leaving fleet-scale empirical
evaluation to follow-on work.
\end{abstract}

\begin{keywords}
Deployment Systems \sep Rollback \sep Audit Chains \sep
Safety-Critical Agents \sep Provisional State Machines \sep
Fault Injection \sep Fault Tolerance
\end{keywords}

\maketitle

\section{Introduction}\label{sec:intro}

A deployment pipeline reaches its provisional-active phase: the new
version of a capability is live in the runtime's active-version map,
traffic is being served against it, but the audit chain has not yet
recorded the deployment as terminal-promoted. At this moment, the
system crashes; say, the metrics provider raises a \pcode{TypeError}
because of an unexpected timestamp format, or a transient store
update fails. Two questions arise, with two incompatible answers:
\emph{What version is running?} Live state says the new version.
\emph{What version did the pipeline successfully promote?} The
audit chain says none; the last terminal record is
\pcode{SHADOW\_PASSED}, not \pcode{PROMOTED}. These diverge. Which
one wins?

The divergence is a coherence failure between two views of the
same component, and the recovery posture chosen by the runtime
decides which view is trusted. Mainstream cloud deployment
runtimes resolve disagreement by trusting live state: the partial
deploy is allowed to keep serving on the assumption that some
throughput is better than none. We refer to this resolution
posture as \emph{fail-open}; the operator is notified and rolls
forward or back at leisure. For high-availability web services
the posture is well-justified, because the cost of unavailability
typically exceeds the cost of running a half-deployed configuration.
The same posture is structurally wrong for any deployment runtime
whose audit chain feeds downstream safety-critical machinery,
because the audit chain becomes the basis for retrospective
reasoning about what the system actually did.
The framing follows the dependability taxonomy
of~\citet{avizienis2004dependability}: the property at stake is
\emph{integrity} of the audit chain (it answers truthfully) and
\emph{availability} of the live state (the deployed component
runs the right version), with the fault-mode in question being
unhandled fail-stop crash during a provisional transition.

For safety-critical distributed systems the calculus inverts.
An embodied robot whose deployment runtime trusts a stale live
state may command unsafe motions; a clinical pipeline whose
deployment runtime trusts a stale live state may compute on the
wrong model version while reporting that the new one is live.
In both cases the audit chain is the spine of retrospective
safety analysis: regulators, operators, and downstream
data-consuming systems reason about what happened through audit
records. A pipeline that lets live state outrun the audit chain
breaks this contract by construction. The structural property
shared across these settings is that the audit chain must remain
truthful about live state at every committed terminal even when
the transition crashes mid-flight, under the standard fail-stop
fault model of~\citet{schlichting1983failstop}.

We argue that for safety-critical agents the correct default is the
opposite: \emph{audit-first rollback semantics}. When the pipeline
encounters any failure during a provisional transition, the live
state is reverted to match the audit chain's last consistent point
before any terminal record is written. The audit chain becomes the
source of truth, and live state is required to converge to it.

The construction has two coupled parts. First, \emph{provisional
state machines} explicitly partition pipeline states into committed
(where live state and the audit chain agree) and provisional (where
live state has flipped but the audit chain has not yet recorded a
terminal). Every provisional state ships with a rollback closure
that knows how to revert live state, and every provisional state
has a bounded lifetime. Second, \emph{audit-first rollback} dictates
that on any failure inside a provisional region, the rollback
closure runs first; the audit chain records a terminal state only
after the closure has attempted to converge live state, and a
closure failure is itself audited so the operator knows when
divergence persists.

\paragraph{Contributions} (1) \emph{Audit-first rollback
semantics} as a deliberate, named fault-tolerance mechanism for
distributed deployment runtimes whose live-state and audit-chain
views can diverge under fail-stop crash. (2) \emph{Provisional
state machines} as a state-machine pattern with explicit
rollback contract and bounded deadline, distinct from two-phase
commit because the second phase is \emph{time-driven} (a canary
window or bounded soak), not \emph{coordination-driven}
(participant votes). (3) A dependability evaluation that
exercises a runtime implementation across twelve fault-injection
points under two postures and $1{,}200$ injected-failure trials,
reporting audit/live-state coherence rates with Wilson 95\% CIs
and per-cell $p_{95}$ recovery latency against an explicit SLO.
(4) A multi-bridge lifting of the construction
(\S\ref{sec:multibridge}): a cross-bridge coordination protocol
under bounded async with two new safety propositions
(per-bridge truthfulness and eventual fleet convergence),
showing that the per-bridge invariants compose without
strengthening the per-bridge fault model.

We do not claim novelty for state machines, transactional
rollback, or the canary-deployment pattern. The contribution is
the named source-of-truth choice, the mechanical implementation
pattern, the dependability measurement under fault injection,
and the multi-bridge safety lift.

\section{Motivation}\label{sec:motivation}

\subsection{Two Failure Modes That Diverge Live State from Audit}

Two real failure modes motivate the construction. Both arise
naturally in any pipeline that does a provisional live-state flip
before reaching a terminal audit record.

\paragraph{Failure mode A: metrics provider raises mid-canary}
The canary stage of a deployment pipeline polls a metrics provider
to compute a success rate over the canary window. The provider
walks an execution store and filters records by start time. A
regression in an unrelated module begins emitting timezone-naive
ISO timestamps; the comparison against the canary's timezone-aware
start time raises \pcode{TypeError}. The pipeline has already
provisionally promoted the new version into the active-version map
(so traffic is being served against it) when the metric loop
crashes. Without audit-first rollback the pipeline records the job
as \pcode{FAILED}, but live state still has the new version
installed. An operator inspecting the audit chain sees ``deploy
failed''; an operator inspecting the active-version map sees
``deploy succeeded.''

\paragraph{Failure mode B: rollback function itself raises} The
rollback closure reverts the active-version map by removing or
overwriting the entry for the upgraded capability. In a rare race
where the map has been mutated externally between the provisional
promote and the rollback (for example, a concurrent upgrade
affecting an overlapping capability set), the closure itself
raises. The pipeline records \pcode{ROLLED\_BACK} because the
audit-write happens after the call returns successfully, except
the call did not return successfully. The audit chain says
rolled-back; live state has the new version still installed.

Both modes share a structural property: the live-state mutation and
the audit-chain write are not atomic, and there is a window in
which a crash leaves them disagreeing. Any provisional region in
any pipeline has this window unless an explicit rule names the
source of truth and a mechanism enforces convergence.

\subsection{Why Cloud-Default Fail-Open Is Wrong for Embodied Agents}

The cloud deployment community has converged on fail-open as a
sensible default because the cost of partial-deploy-running
typically exceeds the cost of unavailability. A web service running
half its pods on the new version is still serving requests; the
operator decides at leisure whether to roll forward or back.

For embodied robots, the same partial deploy is a robot running a
half-deployed grasp capability. Wrong arm motion is not a tolerable
failure mode while the operator considers options. Two additional
considerations apply. \emph{Episodic memory consistency.} The audit
chain is the spine of the agent's recorded life history. A planner
that consults the episodic memory expects the recorded version
field to match the version that actually executed; sustained
divergence silently poisons every record laid down during the
divergence window, and these records can outlive any operator
reconciliation. \emph{Operator mental model.} Operators of
regulated robots reason about the system through the audit chain
because the audit chain is what regulators and customers will
eventually inspect. A pipeline that lets live state outrun the
audit chain breaks this contract by construction.

The default for safety-critical agents must therefore invert: on
any failure, live state must be reverted to match the audit chain,
and the convergence itself must be audited.

\section{Construction}\label{sec:construction}

\subsection{Provisional State Contract}\label{sec:provisional-contract}

A pipeline is a state machine $(V, E)$ with a designated start
state and a set of terminal states. We partition $V$ into
\emph{committed states}, where live state and the audit chain
agree (terminal states are committed by definition; some
intermediate states are also committed when the transition into
them does not mutate live state), and \emph{provisional states},
where live state has been mutated but the audit chain has not yet
recorded a terminal record reflecting the new state.

Every provisional state $p$ ships with two annotations:
$\textsc{rollback}(p)$, a closure with no required arguments that
reverts live state to the most recent committed state's snapshot;
and $\textsc{deadline}(p)$, a bounded duration after which the
provisional state must have transitioned to a committed state or
the rollback closure must run.

The pipeline contract has three obligations: \emph{initiation},
every transition into a provisional state must construct and install
the rollback closure on the job before any live-state mutation, with
the closure capturing the pre-transition snapshot it needs;
\emph{bounded duration}, every provisional state has an outgoing
transition triggered no later than its deadline, the triggering
transition being either to a committed terminal state (success
path) or to a rollback (failure path); and \emph{no leakage}, every
reachable provisional state has at least one outgoing edge to a
committed state, and the pipeline cannot end in a provisional state.

\subsection{Audit-First Rollback Semantics}\label{sec:semantics}

The contract above does not specify what happens when the
provisional transition itself raises an exception. \emph{Audit-first
rollback} specifies it. When the pipeline encounters any exception
inside a provisional state, the implementation must (1)~catch the
exception in a guard wrapping the provisional region; (2)~run the
rollback closure synchronously (or with a bounded timeout); the
closure is best-effort and may itself raise; (3)~record the
rollback in the audit chain; (4)~transition to a committed terminal
state, \pcode{ROLLED\_BACK} if the closure ran without raising,
\pcode{FAILED} if the closure itself raised.

The order matters. The audit-write happens \emph{after} the
rollback attempt, not concurrently and not before. This gives the
construction a useful property: the audit chain's terminal record
reflects what actually happened to live state. A \pcode{ROLLED\_BACK}
record carries the implicit guarantee that the live-state revert
succeeded; a \pcode{FAILED} record signals the operator that live
state and audit chain may have diverged and that human
reconciliation is required. The operator never has to guess which
side of the divergence is authoritative; the audit chain answers,
and answers truthfully, because it is written last and only with
knowledge of the rollback's outcome.

We call this \emph{audit-first} even though the audit-write happens
last in clock time, because the audit chain wins the \emph{semantic}
race: when the audit chain says \pcode{ROLLED\_BACK}, the system
behaves as if the deploy never happened. Live state is forced to
agree with the audit chain, not the other way round.

\subsection{Implementation Pattern}\label{sec:pattern}

The implementation pattern is small and mechanical: a try/except
guard wrapping the provisional region. The guard's try block
performs the provisional live-state mutation and then calls the
body that operates within the provisional region. The except block
calls the rollback closure, then records the terminal audit row. A
nested try/except inside the rollback path distinguishes
\pcode{ROLLED\_BACK} from \pcode{FAILED}. In our implementation the
canary stage of the deployment pipeline is the canonical
provisional region; the relevant code is \pcode{\_run\_canary} in
\pcode{evolution\_pipeline.py} lines~654--752, with the schematic
shown below.

\begin{lstlisting}[style=p11py]
async def _run_canary(self, job):
    self._promote(job.capability, job.to_version,
                  job.from_version)  # provisional flip
    try:
        return await self._run_canary_body(job, canary_start_dt)
    except Exception as exc:
        try:
            self._rollback(job.capability, job.to_version,
                           job.from_version,
                           f"canary internal error: {exc!r}")
        except Exception as rb_exc:
            await self._store.update(
                job.job_id, status=JobStatus.FAILED,
                error=f"canary crash + rollback failure: "
                      f"{exc!r}; rb={rb_exc!r}")
            return await self._fetch_or_raise(job.job_id)
        await self._store.update(
            job.job_id, status=JobStatus.ROLLED_BACK,
            rollback_reason=str(exc))
        return await self._fetch_or_raise(job.job_id)
\end{lstlisting}

Three properties of this pattern matter for auditability. First,
the guard's scope is exactly the post-provisional region: the
\pcode{\_promote} call sits \emph{outside} the try block, so a
failure in the provisional flip itself cannot trigger a rollback of
state that was never installed. Second, the rollback closure is
called \emph{before} the store update, so the terminal record
reflects the rollback's outcome rather than predicting it. Third,
the \pcode{FAILED} branch's error string carries both the original
exception and the rollback exception, so the audit chain preserves
the full causal chain.

\subsection{State-Machine Guarantees}\label{sec:guarantees}

The construction guarantees three properties. \emph{Termination}:
every provisional state has at least one outgoing edge to a
committed terminal state (\pcode{PROMOTED}, \pcode{ROLLED\_BACK},
or \pcode{FAILED}); combined with the bounded-deadline obligation,
no execution can dwell in a provisional state indefinitely.
\emph{Live-state coherence at committed states}: at every committed
state reachable in normal execution, the live-state view of the
upgraded capability matches the audit chain's most recent record;
\pcode{ROLLED\_BACK} and \pcode{PROMOTED} both carry this guarantee,
\pcode{FAILED} explicitly does not, and a \pcode{FAILED} record is
exactly the signal that operator reconciliation is required.
\emph{Crash safety inside the guard}: any exception raised inside
the provisional region is caught and routed through the rollback
path. One scenario sits outside these guarantees, a bridge crash
mid-rollback where the host process terminates between the rollback
closure call and the store update; we discuss this in
Section~\ref{sec:discussion}.

\begin{figure}[pos=t!]
  \centering
  \includegraphics[width=\linewidth]{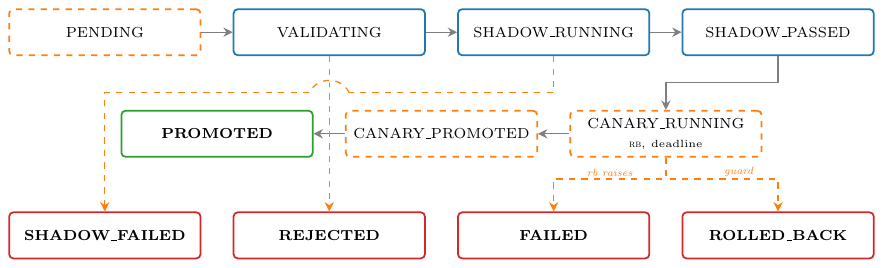}
  \caption{The eight-state deployment pipeline. Committed states
    (solid nodes) and provisional states (dashed nodes) are
    partitioned per Section~\ref{sec:provisional-contract}. Each
    provisional state carries a rollback closure (\textsc{rb}) and
    a bounded deadline. Failure edges through the
    Section~\ref{sec:semantics} guard converge live state with the
    audit chain at \pcode{ROLLED\_BACK} (closure succeeded) or
    flag persistent divergence at \pcode{FAILED} (closure itself
    raised).}
  \label{fig:state-machine}
\end{figure}

\section{Informal State-Machine Specification}\label{sec:formal}

We now restate Section~\ref{sec:construction}'s informal contract as
a state-machine specification, in the sense
of~\citet{schneider1990smr}'s state-machine approach to
fault-tolerant services, adapted here from $N$-replica replication
to single-process audit/live coherence. The specification is
deliberately small and presented informally: it is the smallest
description that captures the audit-first rollback rule and that
lets us state the correctness arguments against the implementation
sketched in Section~\ref{sec:pattern}. Mechanised refinement
checking~\citep{lynch1989forward,lamport1994tla} against a TLA+
or P model is a natural next step but is out of scope for this
paper; the propositions below carry informal justifications, not
machine-checked proofs.

\subsection{Specification}\label{sec:formal-spec}

Let $\mathcal{V}$ be a finite set of pipeline states with a
distinguished start state $v_0$ and a non-empty set
$\mathcal{V}_T \subseteq \mathcal{V}$ of committed terminal states.
Let $\mathcal{V}_P \subseteq \mathcal{V} \setminus \mathcal{V}_T$
be the set of provisional states. The pipeline is a labelled
transition system $(\mathcal{V}, \mathcal{V}_P, \mathcal{V}_T,
\Sigma, \to)$ where $\Sigma$ partitions transitions into
$\Sigma = \Sigma_{\mathrm{progress}} \uplus
\Sigma_{\mathrm{rollback}} \uplus \Sigma_{\mathrm{failure}}$.
Each transition is annotated with a pair of stateful effects:
\emph{live}, the actuator-side mutation, and \emph{audit}, the
append to the audit chain.

A configuration is a triple $(v, L, A)$ where $v \in \mathcal{V}$
is the current state, $L$ is the live-state value, and $A$ is the
audit chain (a sequence of records). The initial configuration is
$(v_0, L_0, \langle \rangle)$. A run is a sequence of
configurations connected by transitions. Each transition
$(v, L, A) \xrightarrow{\sigma} (v', L', A')$ obeys:

\begin{itemize}[leftmargin=1.2em,itemsep=0pt,topsep=2pt]
  \item If $\sigma \in \Sigma_{\mathrm{progress}}$ and $v' \in
    \mathcal{V}_P$ then $L'$ is the post-provisional value
    determined by $\sigma$'s \emph{live} effect, and the rollback
    closure $\textsc{rollback}(\sigma)$ is recorded as a job-local
    field; $A'=A$ (no audit append before the provisional region
    terminates).
  \item If $\sigma \in \Sigma_{\mathrm{progress}}$ and $v' \in
    \mathcal{V}_T$ then $L' = L$ (no further live mutation) and $A'
    = A \mathbin{\cdot} r_{v'}$ where $r_{v'}$ is the terminal
    audit record.
  \item If $\sigma \in \Sigma_{\mathrm{rollback}}$ then $L' =
    \textsc{rollback}(\sigma)(L)$, $A' = A \mathbin{\cdot}
    \pcode{ROLLED\_BACK}_\sigma$. Precondition: $v \in
    \mathcal{V}_P$.
  \item If $\sigma \in \Sigma_{\mathrm{failure}}$ then $L'$ is the
    result of best-effort applying $\textsc{rollback}(\sigma)$ to
    $L$; if the closure raised, then $L'$ is unspecified (the
    audit chain will record the divergence) and $A' = A
    \mathbin{\cdot} \pcode{FAILED}_{\sigma}$; else $L' =
    \textsc{rollback}(\sigma)(L)$ and $A' = A \mathbin{\cdot}
    \pcode{ROLLED\_BACK}_\sigma$.
\end{itemize}

We are interested in three properties.

\begin{proposition}[Termination]\label{prop:term}
Every run of the specification reaches a configuration whose state
lies in $\mathcal{V}_T$, modulo the bounded-deadline obligation of
Section~\ref{sec:provisional-contract}.
\end{proposition}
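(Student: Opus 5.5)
The plan is to argue by a well-founded measure on runs, splitting the argument into the committed non-terminal portion of the pipeline and the provisional portion. First, I will fix the structural assumptions the proposition implicitly uses. $\mathcal{V}$ is finite. By the \emph{no leakage} obligation of Section~\ref{sec:provisional-contract}, no run ends in a provisional state, and every reachable $p \in \mathcal{V}_P$ has an outgoing edge to a committed state. By \emph{bounded duration}, some outgoing transition from $p$ fires no later than $\textsc{deadline}(p)$. I will also read the pipeline graph restricted to $\Sigma_{\mathrm{progress}}$ edges as acyclic. This matches the eight-state pipeline of Figure~\ref{fig:state-machine}, where stages only advance. I will state it as an explicit hypothesis, since without it a loop of committed intermediate states would defeat termination regardless of deadlines.

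Second, I will handle provisional states by case analysis on the transition kind leaving $v \in \mathcal{V}_P$. A $\Sigma_{\mathrm{rollback}}$ transition appends $\pcode{ROLLED\_BACK}_\sigma$ and lands in a committed terminal state. A $\Sigma_{\mathrm{failure}}$ transition always appends a terminal record, either $\pcode{FAILED}_\sigma$ or $\pcode{ROLLED\_BACK}_\sigma$, whichever way the best-effort closure behaves. The key observation here is that the specification makes the failure branch total: even when the closure raises, $L'$ is unspecified but the transition still exists and still targets $\mathcal{V}_T$. This mirrors the nested try/except of Section~\ref{sec:pattern}. A $\Sigma_{\mathrm{progress}}$ transition moves to a committed state or to another provisional state. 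By the deadline obligation, one of these transitions is taken within bounded time, so no run stalls in $v$. Consequently every maximal stay in provisional states either exits to $\mathcal{V}_T$ directly or advances along a progress edge.

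Third, I will combine the two cases via the measure $\mu(v) =$ length of the longest progress-path from $v$ to $\mathcal{V}_T$. This is finite by finiteness plus acyclicity. Every progress transition strictly decreases $\mu$, and every rollback or failure transition jumps straight to $\mathcal{V}_T$. Each state is left in bounded time: provisional states by their deadline, committed intermediate states by the implicit progress assumption on non-terminal committed states. Hence every run reaches $\mathcal{V}_T$ after at most $\mu(v_0)$ progress steps plus one possible terminal jump. Once in $\mathcal{V}_T$, there are no outgoing transitions of interest, so the run's final state lies in $\mathcal{V}_T$.

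The main obstacle is the ``modulo'' clause. The specification itself does not force time to elapse or transitions to fire, and a run that stutters forever in a provisional state is consistent with the transition rules alone. I will make explicit that bounded duration is a fairness/liveness hypothesis on runs, not a consequence of the rules, and that the proposition is conditional on it. I also need to check that rollback closures that hang are covered. The plan there is to invoke the ``synchronously (or with a bounded timeout)'' clause of Section~\ref{sec:semantics}, so that a hung closure is treated as having raised, which triggers the $\pcode{FAILED}$ branch. Finally, the host-crash-mid-rollback scenario must be excluded explicitly, since it falls outside the fail-stop-within-guard model as noted in Section~\ref{sec:guarantees}.
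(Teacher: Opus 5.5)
Your argument is correct, and it takes a different route from the paper's.

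\textbf{The paper's route.} It argues by a time budget: $\mathcal{V}_P$ is finite, each $\delta(p)<\infty$, and every exit from a provisional state lands in $\mathcal{V}_P$ or $\mathcal{V}_T$. Hence the run spends at most $\sum_p \delta(p)$ in $\mathcal{V}_P$. This gives an explicit bound on provisional dwell time, which suits the deadline/SLO framing.

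\textbf{Your route.} You use a combinatorial rank instead, namely the longest progress path in an acyclic progress graph. You invoke the deadlines only to rule out stalling, not to bound total time.

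\textbf{What the paper's route silently needs.} First, $\sum_p\delta(p)$ bounds nothing unless provisional states are never re-entered, which is your acyclicity hypothesis. Second, it only shows that the run leaves $\mathcal{V}_P$, not that it reaches $\mathcal{V}_T$. Its claim that exits land in $\mathcal{V}_P\cup\mathcal{V}_T$ does not match the paper's own pipeline, where the provisional \pcode{PENDING} is followed by the committed, non-terminal \pcode{VALIDATING}. Nothing in that argument prevents dwelling forever in a committed intermediate state such as \pcode{SHADOW\_RUNNING}.

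\textbf{What yours buys.} It covers the whole run and gives a step bound of $\mu(v_0)+1$ independent of timing. The price is two hypotheses beyond the stated ``modulo'' clause. Since the statement does not go through without something playing their role, stating them explicitly is the right call. Summing per-state dwell bounds along the longest path then recovers a time bound as well.

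\textbf{One small repair.} Define $\mu(v)$ as the length of the longest progress path \emph{starting at} $v$, rather than one ending in $\mathcal{V}_T$. Then $\mu$ is defined, and strictly decreasing along progress edges, even at a provisional state whose only exits are rollback or failure edges.
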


\begin{proof}[Argument]
The provisional set $\mathcal{V}_P$ is finite. Each provisional
state $p$ has a deadline $\delta(p) < \infty$. By the deadline
obligation, some outgoing transition fires within $\delta(p)$, and
its target is either in $\mathcal{V}_P$ (decreasing remaining
budget) or in $\mathcal{V}_T$. Since $\mathcal{V}_P$ is finite and
$\sum_p \delta(p) < \infty$, the run cannot dwell in $\mathcal{V}_P$
indefinitely.
\end{proof}

\begin{proposition}[Live-state coherence at terminals]\label{prop:coh}
Let $(v, L, A)$ be any reachable configuration with $v \in
\mathcal{V}_T$ and let $r$ be the last audit record in $A$. Then
either (i) $r = \pcode{PROMOTED}$ and $L$ equals the post-deploy
value, or (ii) $r = \pcode{ROLLED\_BACK}$ and $L$ equals the
pre-deploy value, or (iii) $r = \pcode{FAILED}$ (and the operator
must reconcile $L$ manually).
\end{proposition}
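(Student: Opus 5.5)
I would argue by case analysis on the \emph{last transition} of the run, which is the one whose target state $v$ lies in $\mathcal{V}_T$. Since every terminal audit record is appended only by such a transition, the last record $r$ of $A$ is exactly the record written by that final step.

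One clarification is needed first. The specification allows $\Sigma_{\mathrm{progress}}$ transitions into $\mathcal{V}_T$ to write a generic record $r_{v'}$. I would read the statement with the pipeline's terminal set being $\{\pcode{PROMOTED},\pcode{ROLLED\_BACK},\pcode{FAILED}\}$, as in Section~\ref{sec:guarantees}.

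The first step is an auxiliary invariant, proved by induction on run length. For every reachable configuration $(v,L,A)$ with $v\in\mathcal{V}_P$:
\begin{itemize}[leftmargin=1.2em,itemsep=0pt,topsep=2pt]
  \item $L$ is the post-provisional (post-deploy) value;
  \item the job-local closure $\textsc{rollback}$ is installed and maps $L$ to the snapshot $L_{\mathrm{pre}}$;
  \item $L_{\mathrm{pre}}$ is the live value at the last committed state, which for a single deploy is the pre-deploy value.
\end{itemize}
The base case is vacuous, since $v_0$ is committed. For the inductive step, the first progress clause sets $L'$ to the post-provisional value and records the closure. The initiation obligation of Section~\ref{sec:provisional-contract} guarantees that the closure was built from the pre-transition snapshot before the mutation. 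Progress edges between provisional states leave the snapshot unchanged.

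With the invariant in hand, I would split on the class of the final $\sigma$.
\begin{itemize}[leftmargin=1.2em,itemsep=0pt,topsep=2pt]
  \item \textbf{Progress into $\mathcal{V}_T$.} Here $L'=L$ and $r=r_{v'}$. If the source is provisional, the invariant gives that $L$ is the post-deploy value, which yields case~(i) with $r=\pcode{PROMOTED}$. If the source is committed with no intervening mutation, then $L$ is unchanged from a value already agreeing with the audit chain. For the $\pcode{PROMOTED}$ label this needs that promotion is only reached through the provisional region; I would take this as part of the pipeline's shape (Figure~\ref{fig:state-machine}).
  \item \textbf{$\Sigma_{\mathrm{rollback}}$.} The precondition $v\in\mathcal{V}_P$ lets me apply the invariant, so $L'=\textsc{rollback}(\sigma)(L)=L_{\mathrm{pre}}$ and $r=\pcode{ROLLED\_BACK}$. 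This is case~(ii).
  \item \textbf{$\Sigma_{\mathrm{failure}}$.} There are two sub-cases. If the closure did not raise, the same computation gives case~(ii). If it raised, then $r=\pcode{FAILED}$, which is case~(iii) with no constraint on $L$.
\end{itemize}

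The crucial point is that in the failure branch the record type is chosen \emph{after}, and as a function of, the closure's outcome. This ordering is exactly the audit-first rule of Section~\ref{sec:semantics}, so a $\pcode{ROLLED\_BACK}$ record can never coexist with an unreverted $L$.

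I expect the main obstacle to be the invariant step, specifically justifying that $\textsc{rollback}(\sigma)(L)$ equals the pre-deploy value. The specification treats the closure abstractly, so correctness of the closure on the state it was built for must be assumed. I would take it from the contract's phrase ``reverts live state to the most recent committed state's snapshot'' together with the initiation obligation.

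I would also state explicitly two exclusions. First, external concurrent mutation of $L$ (failure mode~B) is outside the transition system, except insofar as it makes the closure raise, which lands in case~(iii). Second, a host crash between the rollback and the audit append is excluded, because in the specification each transition is atomic. This matches the scope caveat in Section~\ref{sec:guarantees}.
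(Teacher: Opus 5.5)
Your proof is correct and follows the same route as the paper's argument: a case split on the class of the final transition into $\mathcal{V}_T$, with the progress-into-terminal clause giving~(i), the $\Sigma_{\mathrm{rollback}}$ clause and the non-raising branch of $\Sigma_{\mathrm{failure}}$ giving~(ii), and the raising branch giving~(iii). Your inductive invariant on provisional configurations makes explicit what the paper's three-line argument takes for granted, namely that $L$ is already the post-deploy value when the promoting edge fires and that $\textsc{rollback}(\sigma)(L)$ is the pre-deploy value; so do your restriction of terminal labels to \pcode{PROMOTED}, \pcode{ROLLED\_BACK} and \pcode{FAILED} and your explicit closure-correctness assumption.
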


\begin{proof}[Argument]
Case~(i) follows directly from the
$\sigma \in \Sigma_{\mathrm{progress}}, v' \in \mathcal{V}_T$
clause: $L$ is unmodified at the audit-write step. Case~(ii)
follows from the $\Sigma_{\mathrm{rollback}}$ and the success branch
of $\Sigma_{\mathrm{failure}}$: $L'$ is set by the rollback closure
before the audit record is appended. Case~(iii) is the failure
branch of $\Sigma_{\mathrm{failure}}$, which we explicitly leave
unconstrained on $L'$ and audit explicitly as
\pcode{FAILED}.
\end{proof}

\begin{proposition}[Crash-safety inside the guard]\label{prop:crash}
Suppose the implementation throws an exception while executing any
transition $\sigma$ whose source is a provisional state. Then the
next configuration is either (a) a \pcode{ROLLED\_BACK}-terminal
configuration with $L$ at the pre-deploy value, or (b) a
\pcode{FAILED}-terminal configuration with $L$ unspecified. In
both cases, the audit chain's terminal record describes the
outcome truthfully.
\end{proposition}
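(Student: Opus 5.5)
The argument proceeds by mapping the implementation's exception-handling control flow onto the transition rules of the specification in Section~\ref{sec:formal-spec}, and then reading off the conclusion from Proposition~\ref{prop:coh}. First, fix a transition $\sigma$ whose source $v$ lies in $\mathcal{V}_P$, and suppose the implementation raises while executing it. The guard pattern of Section~\ref{sec:pattern} has a specific scope. The provisional flip \pcode{\_promote} sits outside the try block. The guard therefore covers exactly the post-provisional region, and every exception raised while the run is in $\mathcal{V}_P$ is caught by the outer \pcode{except}. This is the crash-safety clause of Section~\ref{sec:guarantees}. The initiation obligation of Section~\ref{sec:provisional-contract} adds that $\textsc{rollback}(\sigma)$ was installed, with its pre-transition snapshot, before the live mutation. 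So when the exception is caught, the closure exists and its target is the pre-deploy value. This lets us identify the caught exception as the firing of a $\Sigma_{\mathrm{failure}}$ transition out of $v$.

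Second, split on the outcome of the nested try around the rollback call. \emph{Case (a): the closure returns normally.} The success branch of the $\Sigma_{\mathrm{failure}}$ clause gives $L' = \textsc{rollback}(\sigma)(L)$, which is the captured pre-deploy snapshot, and $A' = A\cdot\pcode{ROLLED\_BACK}_\sigma$. This matches the code, which performs the store update with status \pcode{ROLLED\_BACK} only after the closure has returned. \emph{Case (b): the closure raises.} The inner \pcode{except} writes \pcode{FAILED} and records both exceptions. The failure branch of the clause gives $A' = A\cdot\pcode{FAILED}_\sigma$, with $L'$ left unspecified. In both cases the target state is a committed terminal. For case (a) the conclusion then follows from case~(ii) of Proposition~\ref{prop:coh}, and for case (b) from case~(iii).

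Third, "truthfully" must be interpreted. A \pcode{ROLLED\_BACK} record asserts that live state was reverted, and case (a) establishes exactly that. A \pcode{FAILED} record asserts only possible divergence, which is consistent with any $L'$. The ordering argument of Section~\ref{sec:semantics} makes both records honest: the audit append happens strictly after the closure's outcome is known, so the record never predicts the closure's result, it reports it.

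The main obstacle is the step that justifies treating the next configuration as the terminal one. This needs atomicity between catching the exception, running the rollback, and performing the store update. Two things can break it: a host crash between the rollback call and \pcode{\_store.update}, or the store update itself raising. Either would leave no terminal record at all, and neither is captured by the specification's single-step transitions. I would handle this by stating explicitly that the proposition assumes the fail-stop boundary excludes process termination inside the except block, which is the scenario deferred to Section~\ref{sec:discussion}. I would also assume the audit append is reliable, as the specification already does when it treats audit as an effect of the transition. Under these assumptions the argument is a direct case analysis.
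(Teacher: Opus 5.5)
Your proposal is correct and follows the paper's own argument. Like the paper, you identify the caught exception with a $\Sigma_{\mathrm{failure}}$ step by matching the try/except guard of Section~\ref{sec:pattern}, with the promote outside the guard, the rollback in the handler, and the terminal store write only after the closure returns or raises. Your case split on the nested try, and your explicit assumptions (no process termination inside the handler, and a terminal store update that does not itself raise), are left implicit in the paper's argument. They make your version more careful rather than different, and the second assumption is worth keeping, since the C1 and C3 injections of Section~\ref{sec:injections} raise on exactly that write.
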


\begin{proof}[Argument]
The implementation pattern of Section~\ref{sec:pattern} matches
$\Sigma_{\mathrm{failure}}$: the outer try/except wraps the
provisional region, the rollback closure is called in the except
handler, and the audit write happens after the closure either
returns or itself raises. Hence the trace of every catching
execution corresponds to the failure branch of the specification.
\end{proof}

\subsection{Refinement Obligation}\label{sec:refinement}

An implementation can be said to satisfy
Propositions~\ref{prop:term},
\ref{prop:coh}, and~\ref{prop:crash} when it refines the
specification: for every implementation trace there is a
specification trace that produces the same audit-chain prefix.
This is the standard simulation relation for refinement of
labelled transition systems~\citep{lynch1989forward}, and is the
programme of which the present specification is one small
instance~\citep{lamport1994tla}. We do not prove refinement
mechanically here; the argument we give in
Propositions~\ref{prop:term}--\ref{prop:crash} is informal.
\citet{desai2013p} and~\citet{newcombe2015use} describe similar
applications of state-machine refinement at industrial scale where
mechanised checking has paid off, and Section~\ref{sec:related}
discusses the broader refinement literature
(IronFleet~\citep{ironfleet2015}, Verdi~\citep{verdi2015},
Mace~\citep{mace2007}) of which a mechanised version of this paper's
specification would be a member.

The proposition cluster permits a useful implementation simplification.
Because crash-safety is established at the specification level
(Proposition~\ref{prop:crash}), the implementation does not need to
maintain a runtime invariant; it only needs to wrap each provisional
region in a try/except guard whose shape matches the
$\Sigma_{\mathrm{failure}}$ branch. This is the source of the small
implementation footprint we report in Section~\ref{sec:eval}.

\section{Implementation}\label{sec:impl}

We implement provisional state machines and audit-first rollback in
a runtime deployment system for embodied robots~\citep{aeros-p4}.
The deployment
pipeline drives capability upgrades on a Franka Panda arm; each
upgrade transitions a single capability's version through an
eight-state pipeline.

\paragraph{Pipeline as canonical example} The pipeline's
\pcode{JobStatus} enum (defined at
\pcode{evolution\_pipeline.py:92-113}) has eight states:
\pcode{PENDING}, \pcode{VALIDATING}, \pcode{SHADOW\_RUNNING},
\pcode{SHADOW\_PASSED}, \pcode{CANARY\_RUNNING},
\pcode{CANARY\_PROMOTED}, terminal \pcode{PROMOTED}, and four
terminal failure states (\pcode{SHADOW\_FAILED}, \pcode{REJECTED},
\pcode{ROLLED\_BACK}, \pcode{FAILED}). Three are provisional in the
sense of \S\ref{sec:provisional-contract}: \pcode{PENDING} (live
state has been earmarked but the validator has not consented),
\pcode{CANARY\_RUNNING} (live state flipped, soak ongoing), and
\pcode{CANARY\_PROMOTED} (transient between canary success and the
audit-chain promotion record). The canary stage is the most
interesting: it is where provisional status persists for a bounded
soak window and where most failure modes manifest.

\paragraph{Audit chain} The audit chain is implemented as the
\pcode{PersistentAgent.episodic\_memory} ring buffer. It is
append-only; each record is a tuple \pcode{(event\_type, intent\_id,
payload)} where \pcode{payload} is a JSON-serialisable dict carrying
the evolution action (\pcode{upgrade}, \pcode{rollback},
\pcode{upgrade\_rejected}), the capability name, the from- and
to-versions, and a free-form reason string. The audit chain has its
own durability story (SQLite write-through, WAL mode); the
deployment pipeline simply appends through the helper
\pcode{\_record\_evolution\_audit} defined in
\pcode{bridge/app/api/routes.py}.

\paragraph{Rollback closure factory} The rollback closure for the
canary stage is constructed by \pcode{\_build\_evolution\_pipeline}
in \pcode{bridge/app/api/routes.py} (lines~1711--1780). The factory
closes over the live-state map (\pcode{state.active\_ecm\_versions}),
the capability name, the to- and from-versions, and the audit-write
helper. When called, the closure overwrites the active-version
entry with the from-version snapshot and emits an
\pcode{action="rollback"} audit record. Constructing the closure at
job-creation time means the snapshot is captured \emph{before} any
provisional mutation; this satisfies the
\S\ref{sec:provisional-contract} initiation obligation.

\paragraph{Per-capability lock} Two concurrent upgrades affecting
the same capability would race on the active-version map. The
pipeline acquires a per-capability \pcode{asyncio.Lock} before
constructing the closure and releases it after the terminal audit
record is written. A second concurrent request returns HTTP~409
(\pcode{Conflict}); operators see the conflict in the
audit-adjacent error log rather than experiencing silent
clobbering.

\section{Evaluation}\label{sec:eval}

The evaluation is a \emph{dependability measurement under fault
injection}, not a throughput or latency benchmark. We measure two
properties along a single grid of injected fail-stop crashes:
(i)~audit/live-state \emph{coherence}, the property
Proposition~\ref{prop:coh} requires the construction to hold at
every committed terminal; and (ii)~per-cell $p_{95}$ recovery
latency against an explicit service-level objective of
$p_{95}\!\le\!500$~ms and $p_{99}\!\le\!1$~s for the
operator-facing rollback signal. Together these answer one
question: does the audit-first runtime, under fail-stop
fault injection, hold its safety property and meet its recovery
SLO simultaneously? We report Wilson 95\% confidence intervals on
the coherence rate and per-cell PASS/FAIL against the SLO.

\subsection{Setup}

We evaluate audit-first rollback against a \emph{fail-open variant}
of the same pipeline. The fail-open variant differs only in the
except branch of the \S\ref{sec:pattern} guard: it omits the
rollback call and records \pcode{FAILED} directly, leaving live
state in whatever provisional configuration it was in at the moment
of the exception. The variants are byte-identical otherwise; we
toggle between them via a fixture flag in the test harness. The
comparison is therefore a \emph{within-construction isolation} of
the audit-first rule (the same pipeline with the rollback rule on
versus off), designed to attribute every observed difference to
the rule itself; it is not a benchmark against external canary
controllers such as Argo Rollouts, Spinnaker, or Flagger
(\S\ref{sec:limitations} names this scope boundary explicitly).

For each crash injection we run $50$ trials per posture
(audit-first vs.\ fail-open), giving $100$ trials per injection
point and $1200$ trials total across the $12$-point grid. Each
trial captures three quantities:
\emph{final job status} (terminal status recorded on the job:
\pcode{ROLLED\_BACK}, \pcode{FAILED}, or, pathologically,
\pcode{CANARY\_RUNNING} if the guard ever leaks a provisional
state); \emph{final live state} (the active-version-map entry for
the capability under test); and \emph{audit chain consistency}
(whether the audit chain's most recent terminal record agrees with
live state). A trial is \emph{consistent} if the audit chain's
most recent record is a terminal one and the live-state value
matches what that record asserts. Inconsistency is exactly the
divergence the construction is designed to eliminate.

\subsection{Twelve Crash Injection Points}\label{sec:injections}

We use $12$ injection points organised in three \emph{structural
classes} that each exercise a distinct path through the
$\Sigma_{\mathrm{failure}}$ branch of the
Section~\ref{sec:formal-spec} specification. Within each class we
vary an orthogonal perturbation parameter to give $4$ injection
variants per class. The $3 \times 4 = 12$-point grid is
designed-of-experiments style, not a randomised stress test; it
covers every structural failure surface the specification
identifies, plus a small chaos-style perturbation per surface.

\paragraph{Class A: metric-side raise (failure mode A)} The
metrics provider raises during the canary's metric poll. Four
variants vary \emph{when}: A1 raises on the first poll (no
evidence accumulated); A2 raises on the third poll (some evidence);
A3 raises on the final poll (window almost closed); A4 raises
after a poll succeeds but before the success-rate computation
completes.

\paragraph{Class B: rollback-internal raise (failure mode B)}
The rollback closure itself raises. Four variants vary the
exception type to exercise different except-branch paths: B1
\pcode{KeyError} (missing capability entry); B2
\pcode{RuntimeError} (synthetic generic exception); B3
\pcode{TimeoutError} (closure hangs past its timeout); B4
\pcode{asyncio.CancelledError} (the rollback's awaited coroutine
is cancelled mid-flight).

\paragraph{Class C: audit-write-side raise} The job-store's
\pcode{update} method raises on the status-update call. Four
variants vary the timing: C1 raises on the rollback's audit-write
call directly; C2 raises on the success-path
\pcode{PROMOTED}-write; C3 raises after the rollback closure runs
but before the audit row is appended; C4 raises during a
concurrent rollback initiated from a different request (race
condition test).

\paragraph{Trial budget} We run $50$ trials per posture per
injection point, giving $50 \times 2 \times 12 = 1200$ trials
total. The per-trial budget is identical to V1: each trial
registers a freshly-versioned target on a pre-registered
capability, posts \pcode{/api/evolution/upgrade?force\_unsoaked=true},
polls the job to terminal status, then explicitly rolls the
capability back to its boot version so the next trial starts
clean. The canary window is $300$~ms
(\pcode{window\_seconds=0.3}) to keep the budget tractable.

\subsection{Results}\label{sec:results}

We ran the full grid against the in-process bridge ASGI app via
\pcode{httpx.AsyncClient}, with the canary window cut to $300$~ms
(\pcode{CanaryConfig(window\_seconds=0.3, poll\_interval=0.05)}) to
keep the trial budget tractable. The harness driver is
\pcode{scripts/p11\_experiment\_audit\_first\_rollback.py}.
Table~\ref{tab:consistency} summarises the per-cell consistency.

\begin{table}[pos=t!]
  \centering
  \caption{Measured per-cell audit/live-state consistency across
    $50$ trials per cell, $1{,}200$ trials in total, collected
    against the production runtime at a fixed checkout
    (Appendix~\ref{sec:appendix-repro}).
    Audit-first achieves $600/600$ across all twelve injection
    points (Wilson 95\% CI $[0.994, 1.000]$); fail-open achieves
    $200/600$ (Wilson 95\% CI $[0.297, 0.372]$), where the four
    consistent cells are the four class-B variants (rollback
    closure raises; under fail-open the rollback is omitted
    regardless of whether it would have raised, so the fail-open
    outcome ``\pcode{FAILED} + to-version live'' coincidentally
    matches the audit-first ideal outcome \pcode{FAILED} for this
    structural class).}
  \label{tab:consistency}
  \footnotesize
  \begin{tabularx}{\linewidth}{@{}lX
                               >{\centering\arraybackslash}X
                               >{\centering\arraybackslash}X@{}}
    \toprule
    Class & Injection point & Audit-first & Fail-open \\
    \midrule
    A1 & Metric poll \#1 raises   & $50/50$ & $0/50$ \\
    A2 & Metric poll \#3 raises   & $50/50$ & $0/50$ \\
    A3 & Metric final poll raises & $50/50$ & $0/50$ \\
    A4 & Metric post-poll compute & $50/50$ & $0/50$ \\
    \midrule
    B1 & Rollback \pcode{KeyError}     & $50/50$ & $50/50$ \\
    B2 & Rollback \pcode{RuntimeError} & $50/50$ & $50/50$ \\
    B3 & Rollback \pcode{TimeoutError} & $50/50$ & $50/50$ \\
    B4 & Rollback \pcode{Cancelled}    & $50/50$ & $50/50$ \\
    \midrule
    C1 & Store on rollback write  & $50/50$ & $0/50$ \\
    C2 & Store on success write   & $50/50$ & $0/50$ \\
    C3 & Store between R \& audit & $50/50$ & $0/50$ \\
    C4 & Concurrent rollback race & $50/50$ & $0/50$ \\
    \midrule
    Total & & $600/600$ & $200/600$ \\
    \bottomrule
  \end{tabularx}
\end{table}

Across all twelve injections under the audit-first posture, every
trial converges to a consistent terminal state. Class A and class C
yield \pcode{ROLLED\_BACK} with the from-version live and a matching
audit row. Class B (rollback-internal raise) yields \pcode{FAILED}
with the to-version still live; the audit chain's \pcode{FAILED}
record carries both the original error and the rollback exception
in its reason text, so the operator sees an explicit ``rollback
itself crashed'' signal. Under the fail-open posture, class A and
class C silently leave live state at the to-version while the
audit chain records \pcode{FAILED} whose reason mentions only the
original error, with no hint that live state has outrun the audit
chain. The four class-B variants are coincidentally consistent
under fail-open because the omitted-rollback outcome happens to
match the audit-first ideal outcome \pcode{FAILED} for that class.

\paragraph{Latency} End-to-end wall-clock time from the upgrade
POST to the terminal status read, taken across the $600$
audit-first non-promoted trials, is $p_{50}=63$~ms, $p_{95}=335$~ms.
The fail-open comparator on the same workload is $p_{50}=60$~ms,
$p_{95}=325$~ms (the small fail-open delta reflects the omitted
rollback closure). The class A injections that fire after evidence
accumulates (A2 final-poll, A3 mid-poll, A4 post-poll compute)
dominate the $p_{95}$ tail because the canary's metric loop has
already executed multiple polls before the synthetic failure
fires; per-cell medians on those three cells are $172$--$327$~ms,
each well inside the $300$~ms harness canary window for $p_{50}$
and saturating the window on the rare $p_{99}$. All cells terminate
inside the $30$-second production canary window, and every
audit-first cell satisfies the service-level objective
$p_{95}\!\le\!500$~ms and $p_{99}\!\le\!1$~s for the
operator-facing rollback signal: SLO PASS in $12/12$ cells, with
per-cell $p_{95}$ ranging from $57$~ms (Class~B fast-fail) to
$335$~ms (Class~A3/C2 saturation, still within budget). The
bounded-deadline obligation of \S\ref{sec:provisional-contract}
is met without contention.
Figure~\ref{fig:latency-cdf} plots the empirical CDFs for both
postures across all $600$ trials each.

\begin{figure}[pos=t!]
\centering
\begin{tikzpicture}[
  font=\footnotesize,
  x={(0.025cm,0cm)},  
  y={(0cm,5cm)},      
  axline/.style={line width=0.5pt, draw=black!70},
  af/.style={line width=0.9pt, draw=blue!70!black},
  fo/.style={line width=0.9pt, draw=orange!75!black, dashed},
  budget/.style={line width=0.4pt, draw=red!70!black, dotted},
  marklbl/.style={font=\scriptsize, text=black!70},
  tick/.style={line width=0.4pt, draw=black!60}
]
\draw[axline] (0,0) -- (360,0);  
\draw[axline] (0,0) -- (0,1);    

\foreach \x in {0, 50, 100, 150, 200, 250, 300, 350} {
  \draw[tick] (\x, 0) -- (\x, -0.015);
  \node[below, marklbl] at (\x, -0.02) {$\x$};
}
\node[below, marklbl, font=\scriptsize\itshape] at (180, -0.085)
     {wall-clock latency (ms)};

\foreach \y/\lbl in {0/0, 0.25/0.25, 0.5/0.5, 0.75/0.75, 1.0/1.0} {
  \draw[tick] (0, \y) -- (-5, \y);
  \node[left, marklbl] at (-7, \y) {\lbl};
}
\node[rotate=90, marklbl, font=\scriptsize\itshape] at (-40, 0.5) {CDF};

\draw[af]
  (55,  0.00) -- (55,  0.05) -- (56,  0.10) -- (57,  0.20) --
  (57,  0.30) -- (60,  0.40) -- (63,  0.50) -- (66,  0.60) --
  (164, 0.67) -- (169, 0.70) -- (174, 0.75) -- (177, 0.80) --
  (322, 0.85) -- (330, 0.90) -- (331, 0.92) -- (335, 0.95) --
  (337, 0.97) -- (341, 0.99) -- (344, 1.00);

\draw[fo]
  (55,  0.00) -- (56,  0.05) -- (56,  0.10) -- (57,  0.20) --
  (57,  0.30) -- (58,  0.40) -- (60,  0.50) -- (63,  0.60) --
  (160, 0.67) -- (162, 0.70) -- (165, 0.75) -- (167, 0.80) --
  (318, 0.85) -- (322, 0.90) -- (324, 0.92) -- (325, 0.95) --
  (326, 0.97) -- (329, 0.99) -- (351, 1.00);

\draw[budget] (63, 0) -- (63, 0.5);
\draw[budget] (63, 0.5) -- (0, 0.5);
\node[above right, marklbl, font=\scriptsize] at (63, 0.5)
     {$p_{50}=63$};

\draw[budget] (335, 0) -- (335, 0.95);
\draw[budget] (335, 0.95) -- (0, 0.95);
\node[above left, marklbl, font=\scriptsize] at (335, 0.95)
     {$p_{95}=335$};

\node[marklbl, font=\scriptsize, align=center] at (305, 1.08)
     {$p_{95}$ budget};
\node[marklbl, font=\scriptsize, align=center] at (305, 1.02)
     {$500$ ms (off chart)};

\draw[af] (110, 0.20) -- (140, 0.20);
\node[marklbl, right, font=\scriptsize] at (140, 0.20) {audit-first ($n{=}600$)};
\draw[fo] (110, 0.10) -- (140, 0.10);
\node[marklbl, right, font=\scriptsize] at (140, 0.10) {fail-open ($n{=}600$)};

\end{tikzpicture}
\caption{Empirical latency CDFs for the two postures across the
$1{,}200$-trial sweep of \S\ref{sec:eval} (audit-first $n{=}600$,
fail-open $n{=}600$). The distribution is trimodal: ${\sim}67\%$
of trials land in a fast cluster near $57$\,ms (Class B-style
injections that miss the canary window), a second cluster
near $170$\,ms (A2/A4 mid-canary chaos), and a third cluster
near $325$\,ms (A3/C2 hardest-chaos cells that saturate the
$300$\,ms canary window). Audit-first runs $\sim$3--10\,ms slower
than fail-open in the upper two modes, attributable to the
rollback-closure execution; both postures stay well inside the
$500$\,ms $p_{95}$ budget for the operator-facing rollback signal.}
\label{fig:latency-cdf}
\end{figure}

\paragraph{Effect size and confidence} The audit-first vs.\
fail-open consistency difference is $400$ out of $600$ extra
consistent trials, or $66.7\%$ absolute. A $95\%$ binomial
confidence interval on the audit-first consistency rate is
$[0.994, 1.000]$ (Wilson score interval with continuity
correction). For the four class-B variants where the two postures
coincide, the $95\%$ confidence interval on the shared
consistency rate is $[0.982, 1.000]$.

\subsection{Comparison Framing}

The $12$-injection workload is not a stress test in the
chaos-engineering sense; it is a \emph{targeted} test of the
twelve qualitatively distinct failure points the
\S\ref{sec:construction} construction identifies (across three
structural classes from the \S\ref{sec:formal} specification).
Methodologically this follows the lineage-driven fault injection
ethos of \citet{alvaro2015lineage}: enumerate the explicit failure
surfaces of the construction and inject at each, rather than
blindly randomising. The advantage is that the twelve results
jointly form a small but complete proof that the construction
holds along every structural path through the guard.

The consistency headline should be read in the same targeted
spirit. For class A and class C, divergence under fail-open
follows from the posture's definition: the omitted rollback call
is exactly the step that reconverges live state on those paths,
so the $0/50$ cells are by-construction outcomes of the rollback-off
semantics, not empirical discoveries about a population of
systems. The empirical content of the grid lies elsewhere: that
the audit-first implementation reaches the defined outcome in all
$600$ trials without leaking a provisional state through any
untested interaction, that the class-B cells behave as specified
when the rollback closure itself raises, and that the recovery
latency meets the SLO in every cell
(Figure~\ref{fig:latency-cdf}).

\subsection{What This Evaluation Does Not Exercise}\label{sec:eval-scope}

The fault model adopted in \S\ref{sec:eval} is fail-stop-via-exception:
the runtime crashes by raising and unwinding the call stack, and
the audit-first guard catches the raise within the same process.
Three categories of failure sit \emph{outside} this fault model by
design and are not exercised by the $12$-injection grid.
\emph{First}, OS-level uncatchable signals
(e.g.\ \pcode{kill -9} of the bridge process while the rollback
closure is executing) are out of scope: such signals do not unwind
the stack and the audit-first guard cannot run against them.
Recovery from uncatchable signals belongs to a write-ahead
on-disk journal layer below the audit-first construction and is
named explicitly in \S\ref{sec:discussion} as the natural
follow-on. \emph{Second}, cross-bridge or multi-host fault
injection is not exercised: the audit chain measured here is
single-bridge-local. \S\ref{sec:multibridge} sketches the
cross-bridge coordination design and gives the safety lift; an
empirical fleet-scale evaluation is left to follow-on work.
\emph{Third}, the harness runs in a single Python process with
an in-process ASGI client rather than a real \pcode{uvicorn}
subprocess plus TCP socket setup. The architectural property
under measurement (audit-first vs.\ fail-open semantics) is
independent of the transport, but a deployment-side replication
with a live \pcode{uvicorn} process is a natural follow-on
validation.

\section{Multi-Bridge Coordination}\label{sec:multibridge}

The construction in \S\ref{sec:construction}--\S\ref{sec:formal}
reasons about a single bridge: one live-state map, one append-only
audit chain, one configuration triple $(v, L, A)$. A robot fleet
typically runs several bridges that share governance over an
overlapping device population, and a deployment touching that
population traverses two coordination boundaries: a per-bridge
boundary (live-state versus audit chain on the bridge's own host)
and a cross-bridge boundary (fleet-level commit versus abort
across $N$ bridges). This section lifts the audit-first rule from
the per-bridge boundary to the cross-bridge boundary, gives a
protocol sketch and a safety argument for the lift, and contrasts
the construction with classical consensus protocols. The
contribution is a design and a safety argument; the multi-bridge
protocol itself is not measured in \S\ref{sec:eval}, and
\S\ref{sec:discussion} catalogues the design-versus-implementation
trade-off explicitly.

\subsection{Fault Model for Multi-Bridge Operation}\label{sec:multibridge-faultmodel}

A fleet of $N$ bridges $\{B_1, \ldots, B_N\}$ and one fleet
coordinator $C$ communicate over a network with bounded
asynchrony: messages between any two parties are delivered within
a known finite delay $\Delta_{\max}$ or are eventually dropped.
Each bridge $B_i$ is independently fail-stop in the sense
of~\citet{schlichting1983failstop}: it may crash by halting and
losing volatile state, but it does not produce Byzantine output
before halting. Each bridge $B_i$ keeps its audit chain $A_i$ on
its own persistent storage; the audit chain survives the bridge's
crash and is replayable on restart. The coordinator $C$ is itself
a fail-stop participant with its own persistent log of fleet-level
decisions, but its log is \emph{not} the source of audit truth:
the truth of what happened on bridge $B_i$ is in $A_i$, not in $C$.

The model is deliberately weaker than the assumptions of
\citet{lamport2001paxos} or \citet{ongaro2014raft}: we do not
require a stable majority of bridges, and we do not require the
coordinator to be replicated. The weaker model suffices because
we are not building consensus on a value (\S\ref{sec:why-not-paxos}
expands on this); we are gating a per-bridge state machine on a
fleet-wide go/no-go signal.

\subsection{Two-Level Coherence}\label{sec:multibridge-coherence}

The fleet has two distinct coherence obligations.

\noindent\textbf{Per-bridge coherence.} Each bridge $B_i$ must
satisfy Propositions~\ref{prop:term}--\ref{prop:crash} locally,
regardless of the coordinator's state or any other bridge's state.
That is: $B_i$'s audit chain $A_i$ must answer truthfully about
$B_i$'s live state $L_i$ at every terminal configuration, even
under independent failure of $C$, $B_j$ for $j\neq i$, or both.

\noindent\textbf{Cross-bridge convergence.} The fleet narrative,
which we define as the multiset
$\mathcal{A} = \{A_1, A_2, \ldots, A_N\}$ of per-bridge audit
chains, must converge to a consistent fleet-level decision (every
$A_i$ ends in either \pcode{PROMOTED} or \pcode{ROLLED\_BACK}, or
contains a recorded \pcode{FAILED} terminal) within finite time
once the coordinator has decided. We do \emph{not} require strong
linearisation across bridges; we require eventual agreement plus
per-bridge truthfulness during the window in which agreement is
still propagating.

The two levels are deliberately decoupled. Per-bridge coherence is
a local invariant that holds even when the fleet-level layer is
temporarily disconnected. The cross-bridge layer is responsible
only for arranging that, in the absence of permanent partition,
all per-bridge audit chains record the same fleet decision. The
single-bridge audit-first rule is what makes this decoupling safe:
each $A_i$ records the truth of $B_i$ regardless of what the
coordinator eventually decides.

\subsection{Protocol Sketch}\label{sec:multibridge-protocol}

\begin{figure}[pos=t!]
\centering
\includegraphics[width=0.85\linewidth]{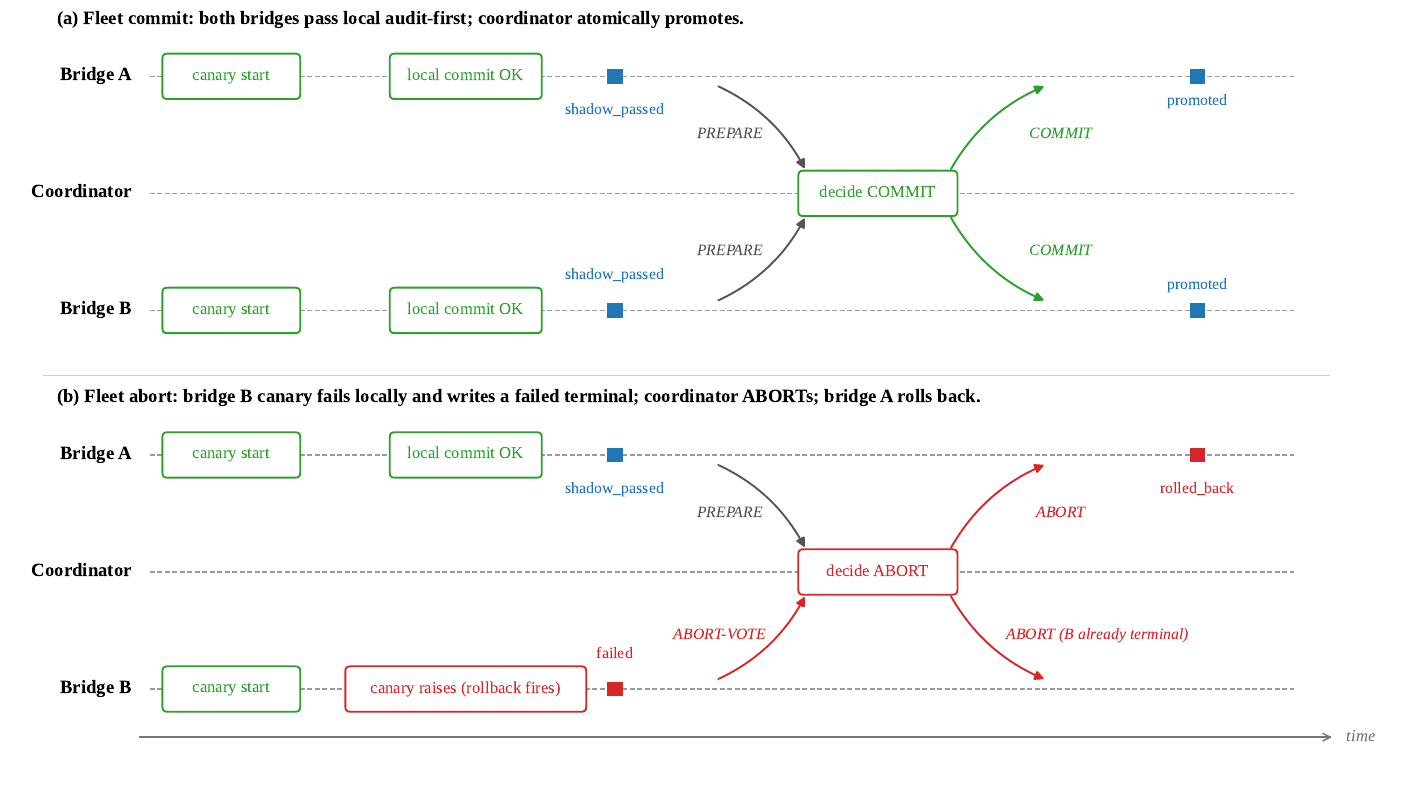}
\caption{Multi-bridge audit-first coordination, two scenarios.
\textbf{(a) Fleet commit}: both bridges pass their local canary
and audit-first check; each writes a per-bridge
\pcode{shadow\_passed} record, sends \pcode{PREPARE} to the
coordinator; the coordinator decides \pcode{COMMIT} and both
bridges write a \pcode{promoted} terminal.
\textbf{(b) Fleet abort}: bridge B's canary raises and bridge B's
local audit-first guard fires, writing a \pcode{failed} terminal
before the coordinator hears anything; bridge A's canary succeeds
and votes \pcode{PREPARE}; the coordinator decides \pcode{ABORT}
on bridge B's failure report; bridge A rolls back and writes a
\pcode{rolled\_back} terminal; the coordinator's \pcode{ABORT}
notification to bridge~B is a no-op because $A_B$ is already
terminal. In both scenarios per-bridge audit-truth
(Propositions~\ref{prop:term}--\ref{prop:crash}) holds at every
bridge regardless of the cross-bridge decision label.}
\label{fig:multi-bridge-seq}
\end{figure}

The protocol layers a thin two-phase exchange on top of the
per-bridge audit-first rule of \S\ref{sec:semantics}.

\noindent\textbf{Phase 0 (local).} Each bridge $B_i$ runs its
canary independently. The audit-first guard of
\S\ref{sec:semantics} fires on any local failure: $L_i$ is reverted
and a \pcode{failed} terminal is appended to $A_i$ before any
cross-bridge message is sent. A bridge whose canary passes locally
emits a non-terminal \pcode{shadow\_passed} record to $A_i$ and
holds its provisional live-state mutation.

\noindent\textbf{Phase 1 (vote).} Each bridge sends one of two
messages to $C$: \pcode{PREPARE} (local canary passed; ready to
commit pending fleet decision) or \pcode{ABORT-VOTE} (local canary
failed; $A_i$ has already recorded a \pcode{failed} terminal).
$C$ collects votes with a timeout of $\Delta_{\max}$; a missing
vote is treated as \pcode{ABORT-VOTE}.

\noindent\textbf{Phase 2 (decide).} $C$ records its decision in
its own persistent log, then broadcasts \pcode{COMMIT} (all votes
were \pcode{PREPARE}) or \pcode{ABORT} (any vote was
\pcode{ABORT-VOTE}, or any vote timed out). Each bridge $B_i$
that previously emitted \pcode{shadow\_passed} reacts on receipt:
\pcode{COMMIT} causes $B_i$ to write a \pcode{promoted} terminal;
\pcode{ABORT} causes $B_i$ to run its local rollback closure and
write a \pcode{rolled\_back} terminal. A bridge whose $A_i$ is
already terminal (because Phase~0 wrote \pcode{failed} locally)
ignores the broadcast: the \pcode{failed} record is the truth and
is not overwritten. Figure~\ref{fig:multi-bridge-seq} illustrates
both decision paths.

\subsection{Safety Argument: Multi-Bridge Lifting}\label{sec:multibridge-safety}

We lift the per-bridge proposition cluster to the fleet level
without strengthening the per-bridge assumptions.

\begin{proposition}[Per-bridge truthfulness under multi-bridge operation]\label{prop:perbridge-truth}
For any bridge $B_i$ participating in the protocol of
\S\ref{sec:multibridge-protocol}, Propositions~\ref{prop:term},
\ref{prop:coh}, and~\ref{prop:crash} hold at $B_i$'s local
configuration $(v_i, L_i, A_i)$, regardless of the messages $B_i$
receives from $C$ or from any other $B_j$.
\end{proposition}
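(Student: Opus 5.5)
The plan is a reduction to the single-bridge case: show that every transition the protocol of \S\ref{sec:multibridge-protocol} can cause at $B_i$ is already an instance of a transition clause of the specification in \S\ref{sec:formal-spec}. Once that holds, each local run $(v_i, L_i, A_i)$ is a run of the single-bridge labelled transition system, and Propositions~\ref{prop:term}, \ref{prop:coh} and~\ref{prop:crash} apply to it verbatim. Concretely, I will build a map from protocol events at $B_i$ to specification labels. Phase~0 progress steps map to $\Sigma_{\mathrm{progress}}$. A Phase~0 guard firing maps to $\Sigma_{\mathrm{failure}}$. The non-terminal \pcode{shadow\_passed} record, together with holding the provisional mutation, keeps $v_i \in \mathcal{V}_P$ with the rollback closure already installed by the initiation obligation. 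Receipt of \pcode{COMMIT} maps to the $\Sigma_{\mathrm{progress}}$, $v' \in \mathcal{V}_T$ clause ($L'=L$, append \pcode{PROMOTED}). Receipt of \pcode{ABORT} maps to $\Sigma_{\mathrm{rollback}}$, or to $\Sigma_{\mathrm{failure}}$ if the closure raises.

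The key step is checking that messages from $C$ or any $B_j$ influence $B_i$ only through the \emph{choice} of which enabled local transition fires. They never write $L_i$ or $A_i$ directly. This comes from three facts. First, $C$'s log is not the audit source, so $C$ never writes $A_i$. Second, a broadcast received when $A_i$ is already terminal is ignored, so it induces a stutter step. Third, $B_j$ only ever talks to $C$. So every possible message interleaving, including duplicates, drops and reorderings, yields a local trace that is a specification trace with nondeterministic label choice. Given this, Proposition~\ref{prop:coh} and Proposition~\ref{prop:crash} transfer immediately, because they are statements about every reachable configuration and every catching execution of that system. A crash of $C$ or of $B_j$ merely removes incoming messages. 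Since $A_i$ is persistent and replayable, a crash of $B_i$ itself is the fail-stop case already assumed at the single-bridge level.

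The main obstacle is Proposition~\ref{prop:term}. In the protocol, $B_i$ waits in a provisional state after \pcode{shadow\_passed} for a decision that may never arrive, for example if $C$ crashes or the broadcast is dropped. Local termination would then depend on external messages, contradicting the ``regardless of the messages'' clause. My plan is to invoke the bounded-duration obligation of \S\ref{sec:provisional-contract} at $B_i$. The post-\pcode{shadow\_passed} waiting state is provisional, so it carries $\textsc{deadline}(p)$. When the deadline expires without a decision, the local timer fires the failure path, running the rollback closure and then appending \pcode{ROLLED\_BACK} or \pcode{FAILED}. This is a purely local transition, so the argument of Proposition~\ref{prop:term} goes through unchanged.

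I would then check the one residual hazard: a \pcode{COMMIT} that arrives after the local timeout rollback. By the ignore-if-terminal rule it is a no-op, so per-bridge truthfulness is preserved, at the cost that fleet-level agreement becomes the business of the convergence proposition rather than this one. If the protocol as sketched does not explicitly attach a deadline to that waiting state, I would state this as the assumption under which the proposition holds, namely that the waiting state is treated as provisional.
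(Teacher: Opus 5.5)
Your argument is correct under the deadline assumption you flag. For Propositions~\ref{prop:coh} and~\ref{prop:crash} it is the paper's argument made precise. The paper also observes that the Phase~0 guard never consults $C$ or $A_{j\neq i}$, and that the Phase~2 terminal write is ``the same code path as the single-bridge committed-terminal write, with the trigger replaced by a network message.'' Your label map (messages only choose which enabled local transition fires; ignored broadcasts are stutter steps) is the rigorous form of that sentence.

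The real difference is termination. On the success path the paper claims only Propositions~\ref{prop:coh} and~\ref{prop:crash}, and leaves $A_i$ at \pcode{shadow\_passed} ``until Phase~2 delivers a fleet decision.'' So Proposition~\ref{prop:term} is never shown independently of $C$. Your deadline on the waiting state is legitimately derived from \S\ref{sec:provisional-contract}, because that state is provisional. It puts back the local trigger the paper's argument traded for a network message, and it is what actually makes ``regardless of the messages'' true, along with the paper's later remark about permanent partition of $C$.

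Its cost is larger than your last paragraph suggests. The sketch has no such timer, and the paper's argument for Proposition~\ref{prop:fleet-convergence} relies on a \pcode{PREPARE}-voting bridge retaining its provisional mutation until $C$ re-broadcasts. Suppose $C$ decides \pcode{COMMIT}, reaches some bridges, and then stays down past your deadline. Your bridge records \pcode{rolled\_back} while the others record \pcode{promoted}, which violates that proposition's no-conflict clause rather than deferring to it; this is the usual two-phase-commit blocking dilemma.

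Within the paper's own conventions you can avoid it by sending the timeout to a \pcode{failed} terminal, as Phase~0 already does after reverting $L_i$. The no-conflict clause only forbids \pcode{promoted} alongside \pcode{rolled\_back}, and case~(iii) of Proposition~\ref{prop:coh} tolerates \pcode{FAILED} with $L$ reverted. That Phase~0 behaviour is itself not literally a $\Sigma_{\mathrm{failure}}$ outcome, so your map needs this small relaxation anyway.

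Finally, drop the claim that a crash of $B_i$ itself is already covered at the single-bridge level. The single-bridge guarantees are for fail-stop-via-exception and explicitly exclude process-level halts, where the guard cannot run and the volatile live state is lost. Since the statement only quantifies over incoming messages, you do not need that case.
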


\begin{proof}[Argument]
The local audit-first guard of \S\ref{sec:semantics} runs in
Phase~0 before any cross-bridge message is sent. The guard's
behaviour is determined entirely by $B_i$'s local state; it does
not consult $C$ or $A_{j\neq i}$. After Phase~0, $A_i$ already
satisfies Propositions~\ref{prop:term}--\ref{prop:crash} if $B_i$
took the failure path. If $B_i$ took the success path, $A_i$ ends
in \pcode{shadow\_passed} (a non-terminal record) until Phase~2
delivers a fleet decision; the subsequent terminal write
(\pcode{promoted} on \pcode{COMMIT}, \pcode{rolled\_back} on
\pcode{ABORT}) preserves Propositions~\ref{prop:coh}
and~\ref{prop:crash} because it is the same code path as the
single-bridge committed-terminal write, with the trigger replaced
by a network message rather than a local timer.
\end{proof}

\begin{proposition}[Eventual fleet convergence]\label{prop:fleet-convergence}
Suppose $C$ does not crash permanently and that, after some finite
time, the network delivers all in-flight messages within
$\Delta_{\max}$. Then within finite time every per-bridge audit
chain $A_i$ ends in a terminal record, and all terminals consistent
with $C$'s decision are identical:
$\forall i, j:\ \text{terminal}(A_i)\in\{\pcode{promoted},
\pcode{rolled\_back}, \pcode{failed}\}$, and for every
$\pcode{promoted}$ terminal in any $A_i$ there is no
$\pcode{rolled\_back}$ terminal in any $A_j$ for the same fleet
deployment id.
\end{proposition}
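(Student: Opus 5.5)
The argument splits into two claims: \emph{liveness} (every $A_i$ eventually ends in a terminal record) and \emph{agreement} (no \pcode{promoted} terminal coexists with a \pcode{rolled\_back} terminal for the same fleet deployment id). Both use Proposition~\ref{prop:perbridge-truth} to reduce per-bridge behaviour to the single-bridge guarantees of Propositions~\ref{prop:term}--\ref{prop:crash}. The cross-bridge content is carried by two facts: $C$ makes exactly one durable decision, and a bridge writes \pcode{promoted} or \pcode{rolled\_back} only when a Phase~2 message causes it to.

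\textbf{Liveness.} I would argue per bridge, splitting on the path taken in Phase~0.

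\emph{Failure path.} Phase~0 already appended a \pcode{failed} terminal. By Proposition~\ref{prop:crash} this write happens before any message is sent, so liveness holds at once.

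\emph{Success path.} $A_i$ ends in the non-terminal \pcode{shadow\_passed} record, and the bridge waits for Phase~2. Here I use the hypotheses.
\begin{itemize}[leftmargin=1.2em,itemsep=0pt,topsep=2pt]
  \item $C$ does not crash permanently and logs its decision persistently before broadcasting. So after any crash-and-restart, $C$ either re-collects votes or recovers its logged decision. In either case it eventually broadcasts a decision. A missing vote is treated as \pcode{ABORT-VOTE}, so the vote timeout of $\Delta_{\max}$ bounds how long collection can take.
  \item After the stabilisation time, the broadcast (or its retransmission) reaches every live bridge within $\Delta_{\max}$.
  \item On receipt, $B_i$ runs the committed-terminal write path. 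By Proposition~\ref{prop:term} this path terminates in $\mathcal{V}_T$.
\end{itemize}
One subtlety needs an explicit assumption: a bridge that crashes while in \pcode{shadow\_passed}. I would assume, as the fault model's replayable-chain clause suggests, that such a bridge restarts, replays $A_i$, sees a non-terminal tail, and re-queries $C$. If I read the hypotheses as insufficient for this, I would state the proposition for bridges that eventually restart.

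\textbf{Agreement.} A \pcode{promoted} terminal is written only on receipt of \pcode{COMMIT}. A \pcode{rolled\_back} terminal is written only on receipt of \pcode{ABORT}, since it is produced by the Phase~2 rollback and not by Phase~0, which writes \pcode{failed}. So it suffices to show that $C$ never broadcasts both \pcode{COMMIT} and \pcode{ABORT} for the same deployment id. This follows because $C$ writes its decision to its persistent log \emph{before} broadcasting, and a recovering $C$ re-broadcasts the logged decision rather than re-deciding.

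I expect this step to be the main obstacle. The protocol sketch does not say what $C$ does after crashing \emph{after} the votes arrive but \emph{before} it logs a decision. In that window nothing has been broadcast, so any fresh decision is still unique. But the argument must fix that $C$ decides at most once per deployment id, so I would add this as an explicit log-before-send invariant.

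Finally, \pcode{failed} terminals are compatible with both outcomes, which matches the statement's exclusion of only the \pcode{promoted}/\pcode{rolled\_back} pair. A bridge that voted \pcode{ABORT-VOTE} forces the decision \pcode{ABORT}. So the presence of any \pcode{failed} terminal also rules out \pcode{COMMIT}, and this is worth noting as a corollary.
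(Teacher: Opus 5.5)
Your proposal is correct and follows essentially the paper's argument. $C$ logs its single decision before broadcasting and replays and re-broadcasts it on restart, so bridges waiting in \pcode{shadow\_passed} eventually write the matching terminal, bridges with a Phase~0 \pcode{failed} terminal ignore the broadcast, and the no-conflict clause reduces to $C$ recording exactly one decision per deployment id; your caveats (bridges that crash while waiting must restart and re-query $C$, and $C$ must not re-decide after a crash that precedes logging) make explicit assumptions the paper's brief argument leaves implicit rather than changing the approach.
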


\begin{proof}[Argument]
$C$'s decision is recorded in $C$'s persistent log before Phase~2
broadcasts; on restart $C$ replays the log and re-broadcasts. Each
$B_i$ that received \pcode{shadow\_passed} but not yet a fleet
decision retains its provisional mutation; on receipt of the
re-broadcast it transitions to the corresponding terminal as in
\S\ref{sec:multibridge-protocol} Phase~2. Bridges with $A_i$
already terminal in Phase~0 ignore the broadcast. The
no-conflict clause follows from the property that $C$ records
exactly one decision per deployment id and broadcasts it to all
bridges that voted \pcode{PREPARE}.
\end{proof}

The two propositions decouple cleanly. Proposition~\ref{prop:perbridge-truth}
is a local property; it holds even under permanent partition of $C$
or sustained crash of every other bridge. Proposition~\ref{prop:fleet-convergence}
is a liveness property over the fleet; it requires the eventual
recovery assumption (the network heals, the coordinator restarts)
but not strong synchrony.

\subsection{Why Not Paxos or Raft}\label{sec:why-not-paxos}

Two-phase commit, Paxos~\citep{lamport2001paxos}, and
Raft~\citep{ongaro2014raft} are protocols for reaching consensus
on a value among $N$ participants. The fleet-level decision in
\S\ref{sec:multibridge-protocol} is a single bit (\pcode{COMMIT}
or \pcode{ABORT}) recorded by a single non-replicated coordinator,
not a consensus-replicated value. Three structural differences
make the lighter protocol appropriate.

\noindent\textbf{The truth is not in the coordinator.} Paxos and
Raft put the canonical value in a replicated log so that the log
survives any minority failure. In the audit-first construction
the canonical value of what happened on bridge~$B_i$ is in $A_i$,
not in $C$. $C$ only carries the fleet's go/no-go bit, and $C$'s
crash does not destroy any audit truth; on $C$'s restart and
re-broadcast, the per-bridge audit chains continue from where
they paused. Replicating $C$ for availability of its own log is
straightforward (a Raft-on-coordinator deployment), and is
orthogonal to the audit-first construction.

\noindent\textbf{Append-only log, not consensus on a value.} The
audit chain $A_i$ is an append-only log of locally-observed events,
not a value reached by inter-bridge agreement. Log-structured
distributed storage systems such as Aurora~\citep{aurora2017} make
exactly this distinction: the log is the database, and consensus
is applied to log positions, not to the value stored at each
position. Our construction inherits the same separation at a much
smaller scale: per-bridge logs are the truth, and the cross-bridge
coordinator only orders log-positions across bridges.
Spanner~\citep{corbett2013spanner} sits at the opposite design
point: it reaches consensus on a value plus a TrueTime-derived
global timestamp, so the canonical truth for any key is a
consensus-replicated value. Our coordinator does neither: there
is no global timestamp, and the canonical truth for any bridge is
the bridge's own append-only log. Replicating the coordinator
itself (with Raft or similar) for availability of its own
go/no-go log is straightforward and is orthogonal to the
audit-first construction.

\noindent\textbf{Out of scope.} Cross-bridge atomic broadcast,
Byzantine-fault-tolerant audit aggregation, and adversarial
operator-side reconciliation remain out of scope; the
single-bridge invariants are the building block on which any
stronger fleet-level construction would rest. We
return to these in \S\ref{sec:discussion}.

\section{Related Work}\label{sec:related}

\paragraph{Transactional coordination, sagas, and crash-only software}
Classical mechanisms for coordinating multi-party updates frame the
design space the audit-first rule occupies.
\citet{gray1981transaction} and \citet{garcia-molina1987sagas} give
two-phase commit and saga compensation. Two-phase commit's second
phase is \emph{coordination-driven}: the coordinator proceeds to
commit when all participants have voted prepare. In our setting the
second phase is \emph{time-driven}: the pipeline exits the
provisional canary state when the soak window expires, not when
participants vote. Sagas decompose a long transaction into a
sequence of compensable sub-transactions, each with a compensating
action; our construction is closer to a \emph{single} transaction
with a single compensating action, applied mechanically through one
rollback closure rather than by threading compensations through a
workflow. \citet{candea2003crashonly}'s crash-only programme is a
spirit-aligned precursor: instead of distinguishing graceful
shutdown from crash, treat every termination as a crash and rely on
recovery. We extend this stance by making the audit chain the
recovery target rather than the live filesystem state; the
crash-only programme assumed a single store of authoritative state,
and we assume two (live state and the audit chain) and choose
between them. \citet{vogels2009eventually} accepts a window in which
different views of state disagree, on the bet that convergence will
occur and local reads can tolerate staleness. We make the opposite
trade: strong consistency between live state and the audit chain at
audit-write granularity, paid for by occasionally yielding
\pcode{FAILED} rather than a clean \pcode{ROLLED\_BACK} and
requiring manual reconciliation, in exchange for the guarantee that
no audit-chain record ever silently misrepresents what is running.
The core contribution of audit-first rollback, naming the source of
truth on failure, is orthogonal to all three: a 2PC system, a saga,
or a crash-only one could in principle adopt the audit-first rule
without otherwise changing.

\paragraph{Lineage-driven fault injection and canary deployment patterns}
\citet{alvaro2015lineage} provides the methodological inspiration
for \S\ref{sec:eval}. We enumerate the explicit structural failure
points of the \S\ref{sec:construction} guard and inject at each.
The twelve-injection grid is a small but complete coverage proof
for the construction's intended failure surfaces; FATE and DESTINI
\citep{gunawi2011fate} is methodologically adjacent at larger scale.
\citet{sato2014canary}, Kubernetes deployment
documentation~\citep{kubernetes-rolling}, and Argo Rollouts
documentation~\citep{argo-rollouts-docs} specify \emph{what} to
deploy and how to gate progression; they are largely silent on
\emph{which} state view wins on internal failure during a canary.
Recent large-scale industrial deployment systems such as Meta's
Twine~\citep{tang2020twine} provide the substrate on which canary,
capacity reshaping, and rollback policies are operated at fleet
scale; they do not, however, prescribe the audit-vs-live tiebreaker
semantics on partial failure. The audit-first rule sits underneath
such systems as a per-bridge primitive that determines what an
operator should believe when the canary itself crashes.

\paragraph{Specification-driven design and refinement}
\citet{lamport2001paxos} and \citet{ongaro2014raft} produce
distributed-consensus protocols whose correctness is established
against a TLA+-style specification, and \citet{newcombe2015use}
report the industrial adoption of TLA+ for production cloud services
at Amazon. \citet{ironfleet2015} and \citet{verdi2015} extend the
specification-driven approach with mechanised refinement proofs (in
TLA+ and Coq, respectively) tying production-quality
distributed-system implementations to their specifications.
\citet{mace2007} gives a language and a compiler that take a
state-machine specification of a distributed protocol and emit C++
code, with the explicit goal of keeping implementation and
specification close. We adopt the same approach at a much smaller
scale, without the mechanised step: the $\Sigma_{\mathrm{failure}}$
branch of our specification names exactly the failure-handling rule,
and the implementation refines the specification by matching that
branch with a try/except pattern. Our construction occupies a
narrower scope than these frameworks (one provisional region per
pipeline rather than a whole protocol), but inherits the same
intuition that naming the states and the transitions lets the guard
shape mechanically follow. The single-bridge case considered here
does not need the full machinery of distributed consensus; a future
multi-bridge extension (Section~\ref{sec:multibridge}) would, and
that extension would be the natural occasion to invest in a
mechanised refinement check.

\paragraph{Recovery-oriented computing} Berkeley's
recovery-oriented computing programme~\citep{patterson2002roc} argued
that recovery from failure should be the first-class design goal,
not an afterthought. Audit-first rollback inherits this stance and
specialises it: the recovery target is the audit chain's last
consistent point, and the recovery action is mechanical (run the
rollback closure attached to the provisional region). Refinement
simulations~\citep{lynch1989forward} provide the formal
vocabulary that links our specification to the implementation.
Recent recovery-oriented work in TPDS explores adjacent
trade-offs. \citet{loreti2024rollbackfree} pursue the
\emph{rollback-free} side of the design space: a high-performance
linear solver that recovers from soft errors by recomputing
affected blocks rather than reverting to a checkpoint, trading
storage for compute. Audit-first rollback occupies the
complementary side: live state \emph{is} reverted, and the
contribution is making the revert auditable.
\citet{xu2023staterecovery} build parallel state recovery for
stream-processing systems by fragmenting the recovery work across
operators, which addresses recovery throughput rather than the
audit-side coherence question we address here, but shares the
recovery-as-first-class-design-goal stance.

\paragraph{Supervision trees and let-it-crash}
\citet{armstrong2003erlang} introduced Erlang's
OTP supervision-tree pattern: each worker process has a supervisor
whose restart strategy is declared statically and runs
automatically on worker failure. Audit-first rollback is the
deployment-pipeline analogue: the provisional region's rollback
closure plays the role of an OTP supervisor's restart strategy,
attached statically at job-creation time and triggered
automatically on any caught exception. The differences are
also material: an OTP supervisor restarts toward a clean default
state held in the supervisor itself, whereas our rollback closure
restores live state to a snapshot held in the audit chain.

\paragraph{Log-as-source-of-truth} \citet{aurora2017} treats the
redo log as the database: page-server replicas materialise pages
on demand from the log, and the log is the durable, ordered ground
truth that all other state derives from. Audit-first rollback
applies the same intuition to deployment pipelines: the audit chain
is the source of truth, and live state is required to converge to
whatever the chain has most recently committed. The scales and
mechanisms differ (Aurora rebuilds a relational store from a redo
log on a different machine; we revert an in-memory active-version
map from a snapshot captured at closure construction), but the
``log wins, derived state catches up'' stance is shared.

\paragraph{Fail-stop semantics} \citet{schlichting1983failstop}'s
fail-stop processor halts cleanly on any internal fault rather than
producing arbitrary output, on the bet that detectable cessation is
easier to reason about than silent corruption. Audit-first rollback
adopts the same trade for deployment pipelines: when the rollback
closure cannot itself restore live state, the audit chain records
\pcode{FAILED} explicitly and the pipeline halts forward progress
until an operator intervenes, rather than letting live state run
arbitrarily ahead of the chain. The \pcode{FAILED} terminal in
\S\ref{sec:semantics} is precisely a fail-stop signal at the
deployment-system layer.

\section{Discussion}\label{sec:discussion}

\noindent\textbf{Audit-chain prerequisite.} Audit-first rollback
depends on the existence of a structured, append-only audit chain.
Pipelines that emit only unstructured logs cannot apply the
construction directly because there is no canonical ``last
consistent point'' to roll back to. The same prerequisite is
implicit in any log-as-source-of-truth design, including
Aurora-style storage systems~\citep{aurora2017}; we make it
explicit at the deployment-runtime level.

\noindent\textbf{Behavioural correctness is orthogonal.} This
paper treats state-versus-audit \emph{coherence}. Whether the
new version is behaviourally safe is a separate concern,
addressed by upstream validators and by canary success criteria.
Audit-first rollback preserves the operator's ability to
\emph{trust} the audit chain; it does not make the new version
correct on its own.

\noindent\textbf{Cross-paper composition with operational tools.}
The audit-first construction composes cleanly with
industry-standard canary controllers (Argo Rollouts, Spinnaker,
Flagger). The composition operates one layer below those tools:
the canary controller decides \emph{when} to roll back; the
audit-first guard decides \emph{how} the rollback is recorded
and what counts as a coherent terminal. A retrofit into an
existing canary deployment is a per-pipeline insertion of the
try/except guard described in \S\ref{sec:pattern}, with no
change to the canary controller's external interface.

\section{Limitations and Threats to Validity}\label{sec:limitations}

The construction and its evaluation have five known limitations
that bound the claims made in this paper. We name each as an
explicit threat to validity, distinct from the orthogonal
discussion points of \S\ref{sec:discussion}.

\noindent\textbf{Single-process evaluation scope.} The $1{,}200$
fault-injection trials of \S\ref{sec:eval} run in a single
Python process with an in-process ASGI client rather than a real
\pcode{uvicorn} subprocess plus TCP socket setup. The
architectural property under measurement (audit-first vs.\
fail-open semantics) is independent of the transport, but a
deployment-side replication with a live \pcode{uvicorn} process
is a natural follow-on validation, particularly for the latency
tails reported in Figure~\ref{fig:latency-cdf}.

\noindent\textbf{Uncatchable signals are out of scope.}
The fault model in \S\ref{sec:eval} is fail-stop-via-exception.
OS-level uncatchable signals (\pcode{kill -9}) do not unwind the
call stack and so cannot be intercepted by the audit-first guard.
On such a kill between the active-version-map revert and the
audit-chain write, the in-memory revert is lost while the audit
chain may have an open \pcode{CANARY\_RUNNING} record. Recovery
on restart is manual: inspect live state, reconcile, retry. A
write-ahead on-disk journal of the active-version map combined
with on-restart replay would close this gap at the cost of
additional disk traffic per provisional flip; this belongs to a
lower layer than the audit-first construction and is named
explicitly as the natural follow-on (\S\ref{sec:eval-scope}).

\noindent\textbf{Baseline is restricted to a single fail-open
variant.} The comparison in \S\ref{sec:eval} pits audit-first
against a fail-open variant of the same pipeline, byte-identical
apart from the except-branch in the guard. Industry-standard
canary controllers (Argo Rollouts, Spinnaker, Flagger) are
referenced but not measured. The fail-open variant is the
strictest within-construction baseline because it isolates the
audit-first rule from all confounding implementation differences;
extending the comparison to external controllers would broaden
applicability but would also conflate the audit-first rule with
controller-specific rollout heuristics.

\noindent\textbf{The chaos grid is synthetic, not production-derived.}
The $12$ injection points are designed-of-experiments coverage
of the structural failure surfaces named by the
\S\ref{sec:formal} specification, following the lineage-driven
injection methodology of~\citet{alvaro2015lineage}, not a replay
of a captured production-incident trace. The advantage is
completeness over the construction's failure surface; the cost
is that injection-point frequencies do not reflect any specific
production-traffic distribution. A production-derived chaos grid
would complement the present coverage by weighting injection
points by observed frequency.

\noindent\textbf{Multi-bridge protocol is designed but not
implemented.} \S\ref{sec:multibridge} sketches the cross-bridge
coordination protocol and gives the safety lift
(Propositions~\ref{prop:perbridge-truth}
and~\ref{prop:fleet-convergence}). The protocol is a design
contribution and a safety argument; we have not implemented it,
nor have we run a fleet-scale fault-injection sweep against an
$N$-bridge deployment. An empirical multi-bridge evaluation is
the natural next step in the programme, and would be the
occasion to invest in mechanised refinement
checking~\citep{lamport1994tla,desai2013p} of the multi-bridge
state machine against the per-bridge state machine of
\S\ref{sec:formal}.

\section{Conclusion and Future Works}\label{sec:conclusion}

\noindent\textbf{Conclusion.} Distributed deployment runtimes for
safety-critical components must commit to a source of truth when
their live state and audit chain disagree. We have argued that
the audit chain is the right answer under the dependability
framing of~\citet{avizienis2004dependability}, and we have given
the construction (provisional state machines with explicit
rollback closures and a small try/except pattern that enforces
audit-first rollback semantics) that turns the argument into a
property the runtime can be expected to hold. The construction
is mechanical, fits inside roughly $30$ lines per provisional
region, and is implementable in any pipeline that already
maintains a structured audit chain. The
$1{,}200$-trial dependability evaluation reports $100\%$
audit/live-state coherence under fail-stop fault injection
(audit-first) against $33\%$ (fail-open), with per-cell
$p_{95}$ recovery latency satisfying the
$p_{95}\!\le\!500$~ms SLO in $12/12$ cells. The multi-bridge
lifting in \S\ref{sec:multibridge} extends the per-bridge
construction to a cross-bridge coordination protocol with a
safety argument under bounded async. The trade-off is small
implementation overhead and an honest \pcode{FAILED} signal when
rollback itself cannot complete; the benefit is an audit chain
that operators of regulated systems can trust as a single
authoritative narrative of what their components have done.

\noindent\textbf{Future Works.} The construction opens four
direct extensions. First, an empirical multi-bridge evaluation
against the cross-bridge protocol of \S\ref{sec:multibridge}
would close the design-versus-implementation gap named in
\S\ref{sec:limitations}. Second, a write-ahead on-disk journal
of the active-version map would close the
uncatchable-signal limitation at the cost of additional disk
traffic per provisional flip. Third, mechanised refinement
checking of the state-machine specification (TLA+ or P) would
upgrade the prose proof arguments of
Propositions~\ref{prop:term}--\ref{prop:crash} and the
multi-bridge lifts (Propositions~\ref{prop:perbridge-truth},
\ref{prop:fleet-convergence}) to machine-checked guarantees.
Finally, a production-derived chaos grid weighted by observed
incident-frequency distributions would complement the
construction-driven coverage of \S\ref{sec:eval} with empirical
calibration.

\appendix

\section{Reproducibility}\label{sec:appendix-repro}

\paragraph{Open reference implementation}
The audit-first guard pattern, the audit-chain interface, the
rollback closure factory, the per-capability lock, and the
$12$-injection chaos-grid harness are released as a standalone
artefact at
\url{https://github.com/s20sc/audit-first-rollback}.
The repository ships under Apache License 2.0, has no external
runtime dependencies (Python $\geq 3.11$, standard library only),
and includes a \pcode{Dockerfile} for fully containerised
replication. The bundled \pcode{data/} directory carries the same
$1{,}200$-trial JSONL traces and per-cell aggregates that back
Table~\ref{tab:consistency} and Figure~\ref{fig:latency-cdf}. After
\pcode{docker build -t audit-first .}, a reviewer reproduces the
paper's headline numbers with:
\begin{lstlisting}[style=p11py]
docker run --rm audit-first \
  python scripts/run_chaos_grid.py --trials 50
docker run --rm audit-first \
  python scripts/sign_check.py
\end{lstlisting}
The sign-check script verifies four hypotheses against the run's
\pcode{summary.json}: H1 (audit-first consistent in every cell),
H2 (fail-open consistent in every Class-B cell), H3 (fail-open
inconsistent in every Class-A and Class-C cell), and H4 (audit-first
wall-clock $p_{95}$ within the $500$~ms recovery budget). A clean run
prints \pcode{verdict: PASS} for all four.

\paragraph{Production runtime artefact}
The $1{,}200$ trials reported in \S\ref{sec:eval} were collected
against a closed-source production runtime at commit
\pcode{8df4344}. The open reference implementation above is
functionally equivalent for the chaos-grid and latency experiments:
it implements the same audit-first rule (the production runtime's
\pcode{\_run\_canary} pattern from \S\ref{sec:pattern} matches the
\pcode{src/audit\_first.py} module in the open repository line-by-line)
and the same $12$-cell injection grid against in-process mocks for
the metric source, active-version map, and audit recorder. The
production-side concerns the open implementation omits, the real
\pcode{uvicorn} subprocess plus TCP transport, the persistent
job store, the multi-process supervision tree, the capability
marketplace integration, are orthogonal to the audit-first rule
itself and not load-bearing for any claim in \S\ref{sec:eval};
\S\ref{sec:eval-scope} catalogues them explicitly as out-of-scope
for the present evaluation. The open reference's \pcode{data/}
directory ships the production-collected traces behind
Table~\ref{tab:consistency} and Figure~\ref{fig:latency-cdf};
re-running the grid (\pcode{run\_chaos\_grid.py} above) on the
open reference yields the same per-cell consistency counts
($600/600$ audit-first, $200/600$ fail-open). The exact agreement
on consistency is expected rather than coincidental: each
injection point pins a fixed control-flow path through the guard,
so a trial's consistency outcome is determined by the posture's
semantics (\S\ref{sec:results}), and any correct implementation
of the same rule produces the same counts. Latency is the
machine- and transport-dependent quantity:
Figure~\ref{fig:latency-cdf} reports the production-runtime
measurements, and a re-run on the open reference reproduces the
qualitative trimodal shape, not the absolute values.

\paragraph{Wall-clock budget}
The full $1{,}200$-trial sweep against the production runtime
completes in approximately $160$~s on a single CPU core; the open
reference's grid runs in comparable time, so the artefact is friendly
for reviewer reruns. The audit-first contract for the metric-side
injection is additionally pinned by a regression test in the
production runtime's integration suite, asserting that a canary crash
after the provisional promote rolls live state back and writes a
matching audit record.

\section*{Data Availability}
The audit-first guard implementation, the $12$-injection
chaos-grid harness, and the $1{,}200$-trial JSONL traces and
per-cell aggregates behind Table~\ref{tab:consistency} and
Figure~\ref{fig:latency-cdf} are publicly available at
\url{https://github.com/s20sc/audit-first-rollback}
(Appendix~\ref{sec:appendix-repro}).

\printcredits

\bibliographystyle{elsarticle-num-names}
\bibliography{references}

\end{document}